\newtheorem{thm}{Theorem}[section]
\newtheorem{lem}[thm]{Lemma}
\newtheorem{conj}[thm]{Conjecture}
\theoremstyle{definition}
\theoremstyle{remark}
\let\c@equation\c@thm
\numberwithin{equation}{section}
\title{SO(3)-Knot States and the Volume Conjecture}
\author{HONGHUAI FANG}
\begin{document}

\begin{abstract}

We study the alternating subspace of holomorphic sections of a special prequantum line bundle over $SU(2)$-character variety of torus, and show that it is isomorphic to the  projective representation of mapping class group of peripheral torus given by the SO(3) Witten-Chern-Simons theory. We conjecture that the large $r$ asymptotics of $L^2$-norm of $SO(3)$-knot states via geometric quantization capture the simplicial volume of knot complements.
\\\\

\end{abstract}

\maketitle

\section{Introduction}
The Witten-Chern-Simons theory\cite{witten89} is a (2+1)-dimensional topological quantum field theory (TQFT) which associate, for a fixed compact Lie group $G$ and an integer $r$, to each closed oriented surface $\Sigma$ a projective 
representation $V_{G,r}(\Sigma)$ of mapping class group of $\Sigma$ and to each three manifold $M$ with boundary $\Sigma$ a state $Z_{G,r}(M)\in V_{G,r}(\Sigma)$. The first mathematical construction of $SU(2)$-TQFT was given by Reshitikhin and Turaev\cite{reshetikhin1991invariants} via representation of quantum group. Not long after, Lickorish\cite{lickorish1993skein} gave a construction using Kauffman bracket evaluated at $4r$-th roots of unity. For $2(2r+1)$-th roots of unity, there was also a (2+1)-TQFT constructed by BHMV\cite{BHMV95} using skein theory, which is called the $SO(3)$-TQFT.

On the geometric side, Witten implies that $V_{G,r}(\Sigma)$ should be viewed as a subspace of holomorphic section of a prequantum line bundle over character variety of $\Sigma$. Therefore, one can study properties of quantum invariants using geometric approaches. For example, Charles\cite{charles2015knot} study the AJ conjecture and the Witten asymptotic conjecture via geometric quantization of $SU(2)$-TQFT. However, the geometric road to the volume conjecture is beset with difficulties since the $L^2$-norm of $Z_{SU(2),r}(M)$ is of polynomial growth in $r$.

In order to study the volume conjecture from geometric perspective we need to study geometric quantization of $SO(3)$-TQFT. We show that in the case of torus the corresponding geometric space is the alternating subspace of holomorphic sections of $L^r\otimes L^{\frac{1}{2}}\otimes\sigma$ over $Hom(\pi_1(T^2),SU(2))/conj$, denoted by $\mathcal{H}^{alt}_{r+\frac{1}{2}}(j, \delta)$. Our main result is stated as follow.
\begin{thm}Let $V_r'(T^2)$ be the $r$-th $SO(3)$-quantum space of torus defined from skein theory. Then there is an isomorphism $I_r':V_r'(T^2)\rightarrow\mathcal{H}^{alt}_{r+\frac{1}{2}}(j, \delta)$ intertwine with curve operators.
\end{thm}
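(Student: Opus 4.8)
The plan is to make both sides completely explicit, match a chosen basis, and then verify by direct computation that the two families of curve operators agree.

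First I would describe the geometric side concretely. Writing a commuting pair of $SU(2)$-holonomies in diagonal form, the smooth locus of $\mathrm{Hom}(\pi_1(T^2),SU(2))/\mathrm{conj}$ is the quotient of the two-torus $W=\mathbb{R}^2/\mathbb{Z}^2$ of eigenvalue data $(x,y)$ by the Weyl involution $\iota\colon(x,y)\mapsto(-x,-y)$, i.e.\ the pillowcase with four orbifold points at the $2$-torsion. The invariant complex structure $j$ makes $W$ an elliptic curve $E_\tau$, on which $L$ restricts to a degree-$2$ bundle; hence $L^r$ has degree $2r$, the shift $L^{1/2}$ contributes degree $1$, and the metaplectic twist $\sigma$ is flat ($\deg 0$, as $K_{E_\tau}$ is trivial), so $L^r\otimes L^{1/2}\otimes\sigma$ has degree $2r+1$ and its space of holomorphic sections is spanned by the level-$(2r+1)$ theta functions $\theta_0,\dots,\theta_{2r}$. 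Since $\iota$ sends $\theta_a\mapsto\theta_{2r+1-a}$, the anti-invariant combinations $\vartheta_a=\theta_a-\theta_{2r+1-a}$ with $a=1,\dots,r$ form a basis of the alternating subspace, so $\dim\mathcal{H}^{alt}_{r+\frac12}(j,\delta)=r$; that it is the alternating and not the invariant part which survives is dictated by the half-form, whose prescribed vanishing at the orbifold points matches that of the odd theta functions.

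Next I would recall that $V_r'(T^2)$ carries the standard skein basis of solid-torus cores colored by the admissible idempotents of the $SO(3)$ theory at $p=2r+1$, of which there are exactly $r$; equating $\dim V_r'(T^2)=r=\dim\mathcal{H}^{alt}_{r+\frac12}(j,\delta)$, I would define $I_r'$ by sending the core colored by the $a$-th admissible idempotent to $\vartheta_a$ (with a normalizing scalar fixed by comparing Hermitian inner products). The substance of the theorem is then the intertwining with curve operators, for which I would invoke the Frohman--Gelca product-to-sum formula: the skein algebra of $T^2$ is the Weyl-invariant part of the quantum torus and is generated by the two primitive curves, so it suffices to match the operators of these two generators. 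On the skein side the core curve acts by the fusion recursion $e_n\mapsto e_{n+1}+e_{n-1}$, hence tridiagonally, and the transverse curve acts by the companion two-term formula whose coefficients are the $A^{\pm1}$-weights evaluated at the $SO(3)$ root of unity. On the geometric side the curve operator of a simple closed curve is the Toeplitz operator of its trace-of-holonomy function; restricted to holomorphic sections and evaluated via the classical ladder identities, multiplication by such a trace shifts the theta index by $\pm1$ (resp.\ multiplies by a character), so these operators are again tridiagonal (resp.\ diagonal) in the $\vartheta_a$. I would then match the two tridiagonal matrices by computing the geometric matrix elements as principal Toeplitz symbols plus their subleading half-form corrections and comparing them termwise with the $A^{\pm1}$-weights above.

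I expect the genuine difficulty to be concentrated at the ends $a=1$ and $a=r$ of the index range, where the alternating symmetrization folds $\theta_a$ onto $\theta_{2r+1-a}$ and where the orbifold points of the pillowcase sit. There one must check that the difference operator built from the Toeplitz calculus reproduces the reflecting boundary conditions of the skein recursion (the relations forcing $e_0=0$ and closing the recursion at the top color), equivalently that passing to the Weyl-anti-invariant part is compatible simultaneously with the theta-function ladder relations and with the $SO(3)$ fusion rules. Once this boundary compatibility is established the two representations of the generating curves agree, and since these curves generate the skein action the map $I_r'$ intertwines all curve operators, completing the proof.
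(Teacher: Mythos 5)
Your overall architecture is genuinely different from the paper's, and in outline it is viable. You propose to verify the intertwining on the two generating curves and then extend to all curves via the Frohman--Gelca product-to-sum formula; the paper never checks the tridiagonal operator directly at all. Instead it first proves that $I_r'$ is independent of the choice of the diffeomorphism $\psi$, by computing the modular transformations of the theta basis under the $SL(2,\mathbb{Z})$ generators $T$ and $S$ (Poisson summation) and matching them with the skein formulas (2.2); then, for an arbitrary curve $\gamma$, it simply chooses $\psi$ with $\psi(\gamma)=S^1\times\{1\}$, so only the diagonal computation is ever needed. Your route, if completed, proves the bare statement more economically, while the paper's route proves strictly more: naturality of $I_r'$ under change of $\psi$, i.e.\ compatibility with the mapping class group action, which is what lets knot states be regarded canonically as holomorphic sections. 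Your description of the geometric side (level-$(2r+1)$ theta functions, Weyl anti-invariants, dimension $r$) agrees with the paper's basis $\Phi_l=\frac{1}{\sqrt{2}}(\Psi_l-\Psi_{-l})$, and the boundary phenomenon you flag at $a=1,r$ is handled automatically in that picture by the relations $\Phi_0=0$, $\Phi_{-l}=-\Phi_l$, $\Phi_{l+2r+1}=\Phi_l$, which reproduce exactly the reflecting conditions of the skein recursion.

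The genuine gap is your definition of the geometric curve operators as Toeplitz operators of trace-of-holonomy functions. In this linear setting the Toeplitz compression of an exponential $e^{i\ell}$ of a linear form is not the translation operator but a curve-dependent scalar multiple of it, $e^{-c_\gamma/(2r+1)}T^*_{\gamma/(2r+1)}$ with $c_\gamma>0$ determined by the $j$-length of $\gamma$ (the Gaussian factor of the Bargmann projection). This has two fatal consequences for your plan as written: (i) the matrix entries of your operators are $e^{-c_\gamma/(2r+1)}$ times the clean values $-2\cos\bigl(\frac{2\pi l}{2r+1}\bigr)$ and $-(\,\Phi_{l-1}+\Phi_{l+1})$, whereas the skein side (2.1) produces those values exactly, so the proposed termwise comparison fails at every finite $r$ and holds only asymptotically; (ii) products of Toeplitz operators are not Toeplitz operators of the product symbol, so the product-to-sum relations hold on your geometric side only modulo $O(1/r)$, which invalidates the step where matching the two generators is claimed to suffice. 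Both defects vanish if you define the curve operators, as the paper does, by Heisenberg translations, $T_{r+\frac{1}{2}}(\gamma)=-\bigl(T^*_{\gamma/(2r+1)}+T^*_{-\gamma/(2r+1)}\bigr)$: then the $\mu$-operator is exactly diagonal, the $\lambda$-operator is exactly the shift $\Phi_l\mapsto -(\Phi_{l-1}+\Phi_{l+1})$, and the exact Heisenberg relation $T^*_xT^*_y=e^{-\frac{(2r+1)i}{4}\omega(x,y)}T^*_{x+y}$ gives the quantum-torus, hence Frohman--Gelca, relations on the nose; with that replacement your generation argument does close the proof.
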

Therefore, the $SO(3)$-knot state $Z'_r(S^3\backslash K)\in V_r'(T^2)$ defined from skein theory can be view as a holomorphic section in $\mathcal{H}^{alt}_{r+\frac{1}{2}}(j, \delta)$. By analyzing $L^2$-norm of $Z'_r(S^3\backslash K)$ we state the following conjecture, which is somewhat equivalent to the volume conjecture.

\begin{conj}
    For any knot $K$ in $S^3$, let $\operatorname{vol}\left(S^{3} \backslash K\right)$ be the simplicial volume of its complement. Then we have
    \begin{equation}
        \lim _{r \rightarrow+\infty} \frac{2 \pi}{r} \log ||Z'_r(S^3\backslash K)||_{r+\frac{1}{2}}=\operatorname{vol}\left(S^{3} \backslash K\right).
    \end{equation}
    
\end{conj}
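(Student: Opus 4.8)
The plan is to reduce this conjecture to a growth statement for colored Jones polynomials of Chen--Yang type, by computing the geometric $L^2$-norm $\|Z'_r(S^3\setminus K)\|_{r+\frac{1}{2}}$ explicitly in the skein model and transporting the result through the isomorphism $I'_r$ of the Theorem. First I would use the orthonormal basis $\{e_n\}$ of $V'_r(T^2)$ indexed by admissible colors: pairing the knot state against $e_n$ caps $S^3\setminus K$ by the $n$-colored core of a solid torus and therefore returns the $SO(3)$-colored Jones invariant $J_n(K)$ of $K$ evaluated at the relevant $2(2r+1)$-th root of unity. Because the Theorem makes $I'_r$ intertwine the TQFT Hermitian pairing with the $L^2$ pairing on $\mathcal{H}^{alt}_{r+\frac{1}{2}}(j,\delta)$ up to a factor coming from the half-form bundle $L^{1/2}\otimes\sigma$ that is only polynomial in $r$, I obtain a formula of the shape
\begin{equation}
\|Z'_r(S^3\setminus K)\|_{r+\frac{1}{2}}^{2}\;=\;\sum_{n} w_n(r)\,\bigl|J_n(K)\bigr|^{2},
\end{equation}
where the weights $w_n(r)$ record the Gram matrix of the $e_n$ in the geometric metric. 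Up to this polynomial normalization the right-hand side is the $SO(3)$ Turaev--Viro invariant of the complement (via Roberts' identity $TV=|RT|^2$ and the Detcherry--Kalfagianni--Yang sum formula), so the conjecture is essentially the Chen--Yang volume conjecture for $S^3\setminus K$ repackaged through geometric quantization.

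The asymptotic analysis then proceeds by a Laplace/saddle-point estimate of this sum. The key point is that the weights $w_n(r)$ are of polynomial size, while the top color $n\sim c\,r$ evaluates $J_n(K)$ in the diagonal regime governing the volume conjecture, where $\bigl|J_n(K)\bigr|$ is expected to grow like $\exp\!\bigl(\tfrac{N}{2\pi}\operatorname{Vol}(S^3\setminus K)\bigr)$. One must show the sum localizes at the top color, so that the exponential growth rate of the full norm equals that of this single term; after matching $r\leftrightarrow N$ and the prefactor $2\pi/r$ this yields the claimed limit, with the simplicial (Gromov) volume assembling the contributions of the hyperbolic pieces of the geometric decomposition for non-hyperbolic knots — consistent with polynomial (hence vanishing-rate) growth for torus knots and with additivity along incompressible tori.

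I would corroborate this by the parallel computation on the geometric side, which is the novel ingredient the Theorem supplies. Through $I'_r$ the state $Z'_r$ is a holomorphic section of $L^r\otimes L^{1/2}\otimes\sigma$ over the two-dimensional $SU(2)$-character variety of the peripheral torus, and its $L^2$-norm is an integral whose integrand should concentrate, as $r\to\infty$, at the restriction to $\partial(S^3\setminus K)$ of the holonomy of the complete hyperbolic structure. The critical value of the phase at that point is the complexified volume $\operatorname{Vol}+i\,\mathrm{CS}$, so stationary phase recovers the same exponential rate; crucially, the curve-operator intertwining from the Theorem forces $Z'_r$ to be annihilated to leading order by the quantized $A$-polynomial, which is exactly what pins the relevant saddle to the geometric component of the character variety and selects the discrete faithful boundary character.

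The main obstacle is twofold. Analytically, the genuinely new work is to control the entire sum: proving that no subdominant color produces faster growth or constructive/destructive interference that overwhelms the top-color term requires uniform estimates on the weights $w_n(r)$ and on the off-diagonal overlaps of the coherent-type basis states in the geometric metric, equivalently a rigorous justification of the stationary-phase concentration on the geometric component. More fundamentally, the input growth $\bigl|J_N(K;e^{2\pi i/N})\bigr|\sim\exp\!\bigl(\tfrac{N}{2\pi}\operatorname{Vol}\bigr)$ is precisely the volume conjecture, open for general knots; accordingly the realistic deliverable is the \emph{equivalence} asserted in the text — a rigorous reduction of this geometric statement to the classical (Kashaev--Murakami--Murakami / Chen--Yang) volume conjecture — together with an unconditional verification of both sides in the accessible cases such as torus knots, the figure-eight, and other small hyperbolic knots.
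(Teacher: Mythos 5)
Your proposal follows essentially the same route as the paper: the paper derives the norm formula
$||Z'_r(S^3\backslash K)||_{r+\frac{1}{2}}^2=\sum_{1 \leqslant n \leqslant r}|\eta_{r}'[n]|^2\left|J_{K,n}\left(\exp\left(\tfrac{2\pi i}{r+\frac{1}{2}}\right)\right)\right|^2$
(its equation (3.19), with your weights $w_n(r)$ being exactly $|\eta_r'[n]|^2$), observes that these weights grow only polynomially while numerical evidence indicates the top color $n=r$ dominates the sum, and thereby presents the statement as a repackaging of the Chen--Yang evaluation $\exp\left(\tfrac{2\pi i}{r+\frac{1}{2}}\right)$ of the volume conjecture. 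Since the statement is itself a conjecture in the paper (no proof is attempted), your honest framing of the deliverable as an equivalence with the classical volume conjecture rather than a proof matches the paper's stance; your additional stationary-phase and quantized $A$-polynomial corroboration on the character-variety side goes beyond what the paper does, but the core reduction is identical.
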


Through geometrical methods we can not only use the plentiful tools from geometric analysis but also introduce the physical concept of quantum complexity , inspired by the holographic complexity=volume conjecture\cite{brown2016complexity}. Moreover, complexity emerges naturally through Mahler measure of A-polynomials\cite{boyd2002mahler} and topological complexityof 3-manifolds\cite{francaviglia2012stable}.  We expect that the geometric picture of $SO(3)$-TQFT brings to light some relationships between math, physics and quantum computation.

\section{SO(3)-knot states via skein theory}
In this section we review the $SO(3)$-TQFT following the skein theoretical approach by BHMV.  The \textit{Kauffman bracket skein module} $K_A(M)$ of oriented 3-manifold $M$ is defined as the quotient of the free $\mathbb{Z}[A,A^{-1}]$-module generated by the isotopic classes of framed links in $M$ by Kauffman relations. For example, there is a canonical isomorphism between $K_A(S^3)$ and $\mathbb{Z}[A,A^{-1}]$ via Kauffman bracket. The Kauffman bracket skein module of solid torus $K_A(D^2\times S^1)$ can be viewed as the module $\mathbb{Z}[A,A^{-1}][z]$. Fix a framed link $L$ with $m$ components in $S^3$, one can define a $\mathbb{Z}[A,A^{-1}]$-multilinear map 
$$\langle\ \ \ , \ldots ,\ \ \  \rangle_L : K_A(D^2\times S^1)^{\otimes m} \rightarrow \mathbb{Z}[A,A^{-1}],$$
as follow. For monomials $z^{i_k}\in \mathbb{Z}[A,A^{-1}][z]\cong K_A(D^2\times S^1),$ $\langle z^{i_1} , \ldots ,z^{i_m}  \rangle_L$ is simply the Kauffman bracket of the framed link obtained from $L$ by replacing the $k$-th component by $i_k$ parallel copies, and extend $\mathbb{Z}[A,A^{-1}]$-linearly on the whole $K_A(D^2\times S^1).$ Let $e_n$ be the $n$-th Chebyshev polynomial in $z$, then the \textit{colored Jones polynomials} of an oriented  knot $K$ in $S^3$ are defined by $$J_{K,n+1}(t)=\frac{\big((-1)^n A^{n^2+2n}\big)^{w(K)} \langle e_n \rangle_K}{(-1)^{n}[n+1]}\big|_{t=A^4},$$ where $w(K)$ is the writhe number of $K$ and $[n]=\frac{A^{2n}-A^{-2n}}{A^{2}-A^{-2}}$ is the quantum number.

From now on, let $A$ be a primitive $(4r+2)$-th root of unity for an integer $r\geqslant 3$. Consider some special elements of $K_A(D^2\times S^1)$, called the \textit{$SO(3)$-Kirby coloring}, defined by 
$$\omega_{r}'=\underset{i=0}{\overset{r-1}{\sum}}\langle e_{i} \rangle e_{i}$$ for any integer $r$. We also for any $r$ introduce
$$\eta_{r}'=\frac{2\sin(\frac{2\pi}{2r+1})}{\sqrt{2r+1}},\\\ \kappa'_{r}=\eta_r' \langle \omega_r' \rangle_{U_+}.$$

The \textit{$SO(3)$-Reshetikhin-Turaev invariants} of closed oriented 3-manifold $M$ obtained from $S^3$ by doing surgery along a framed link $L$ with number of components $m$ and signature $\sigma$ are defined by
$$\langle M\rangle_{r}'=(\eta_{r}')^{1+m}\ (\kappa_{r}')^{-\sigma}\ \langle \omega_{r}', \dots, \omega_{r}'\rangle _{L}.$$
Let $\Tilde{L}$ be a framed link in $M.$ Then, the SO(3)-Reshetikhin-Turaev invariants of the pair $(M,\Tilde{L})$ are defined by
$$\langle M, \Tilde{L} \rangle_{r}'=(\eta_{r}')^{1+m}\ (\kappa_{r}')^{-\sigma}\ \langle \omega_{r}', \dots, \omega_{r}', 1 \rangle _{L\cup \Tilde{L}}.$$

Using the universal construction, the $SO(3)$-Reshetikhin-Turaev invariants determine a family $(V_r',Z_r')$ of quantization functors, called the $SO(3)$-TQFT, from cobordism categories to the category of complex vector spaces.  This TQFT associate to any closed oriented surface $\Sigma$ a hermitian complex vector space $V_r'(\Sigma)$ and to the cobordism $M$ with $\partial M=\Sigma$ and $L$ in $M$ a vector $Z_r'(M,L)\in V_r'(\Sigma)$. In particular, $V_r'(T^2)$ is identified with quotients of the Kauffman bracket skein module $K_A(D^2\times S^1)$, and  the vectors $\{ e_0,\dots,e_{r-1}\}$ form a Hermitian basis of $Z_r'(T^2)$. For a knot complement $S^3\backslash K$ one can show that $$Z_r'(S^3\backslash K)=\underset{0 \leqslant n \leqslant r-1}{\sum}\eta_{r}'\langle e_{n}\rangle_K e_n,$$ which is called the \textit{$SO(3)$-knot states} of $K$.

Fix an oriented diffeomorphism $\psi:\Sigma=\partial(S^3\backslash K)\rightarrow S^1\times S^1$. For any curve $\gamma$ in $\Sigma$, we define the endomorphism $Z_r'(\gamma)$ of $V_r'(\Sigma)$ by 
$$Z_r'(\gamma)=Z_r'(\Sigma\times [0,1], \gamma\times [\frac{1}{3},\frac{2}{3}]).$$
In particular, for $\mu=\psi^{-1}(S^1\times\{1\})$ and $\lambda=\psi^{-1}(\{1\}\times S^1)$ we have two operators $Z_r'(\mu)$ and $Z_r'(\lambda)$. For any $g\in SL(2,\mathbb{Z})$, we define the endomorphism $Z_r'(g)$ of $V_r'(\Sigma)$ by 
$$Z_r'(g)=Z_r'(M_g),$$ where $M_g=S^1\times S^1\times [0,1]$ is the mapping cylinder whose boundary
is identified with $-\Sigma\sqcup\Sigma$ through $\psi\sqcup\psi\circ g$. Consider the generators of  $SL(2,\mathbb{Z})$ 
$$T=\begin{pmatrix}
1 & 1\\
0 & 1
\end{pmatrix}, S=\begin{pmatrix}
0 & -1\\
1 & 0
\end{pmatrix}.$$

It is showed in BHMV that for $n=0,1,\dots,r-1$ we have
\begin{equation}\label{mulambda}
 Z_r'(\mu)e_n=-2cos(\frac{2(n+1)\pi}{2r+1})e_n,\\\ Z_r'(\lambda)e_n=-(e_{n-1}+e_{n+1}),   
\end{equation}
and
\begin{equation}\label{TS}
Z_r'(T)e_n=exp(\frac{\pi i(n^2+2n)}{2r+1})e_n,\\\ Z_r'(S)e_n=\underset{m \in \mathbb{Z} /(2r+1) \mathbb{Z}}{\sum}\frac{2ie^{-\frac{\pi i}{4}}}{\sqrt{2r+1} } \sin(\frac{2\pi(m+1)(n+1)}{2r+1})e_m.
\end{equation}

\section{SO(3)-knot states via geometric quantization}
In this section we construct a special family of prequantum line bundles over the character variety of torus, or the moduli space of flat connections on torus. Then we will show that the space of holomorphic sections of such line bundle is unitary isomorphic to the corresponding $SO(3)$-knot state space.

Let $(V,\omega)$ be a real 2-dimensional symplectic vector space equipped with a compatible linear complex structure $j$. Let $\alpha\in \Omega^1(V,\mathbb{C})$ be given by $\alpha_x(y)=\frac{1}{2}\omega(x,y)$ and endow the trivial line bundle $L=V\times \mathbb{C}$ with connection $\nabla=d-i\alpha$, the standard hermitian structure. and the unique holomorphic structure $h$, making it a prequantum line bundle. Given a lattice $\Lambda\subset V$, then there is a prequantum line bundle over $T_{\Lambda}\triangleq V/\Lambda$ if and only if the symplectic volume of the fundamental domain $D$ is an integer multiple of $2\pi$.

We consider two types of half line bundle. For the sake of simplicity let the volume of $D$ equal to $4\pi$. The induced simplectic form of $T_{\Lambda}$ is in $H^2(T_{\Lambda},4\pi\mathbb{Z})$, and therefore the induced prequantum line bundle over $T_{\Lambda}$ has a natural squart root since the first Chern class $c_1(L)=[\frac{\omega}{2\pi}]$ is even. We denote this half form by $L^{\frac{1}{2}}$ and endow it with the standard hermitian structure $h_{\frac{1}{2}}$. Meanwhile for the canonical line bundle $K_j=\left\{\alpha \in \Omega^1(V,\mathbb{C})|\alpha(j \cdot)=i \alpha\right\}$ over $V$ we choose a half form $\delta$ of $K_j$ with an isomorphism $\varphi: \delta^{\otimes 2} \rightarrow K_j$. $K_j$ has a natural scalar product such that the square of the norm of $\alpha$ is $i \alpha \wedge \frac{\bar{\alpha} }{\omega}$. We endow $\delta$ with the hermitian structure $h_{\delta}$ making $\varphi$ an isometry. 

Choose a basis $\{\mu,\lambda\}$ of $\Lambda$ such that $\omega(\mu,\lambda)=4\pi$. Let $\tau=a+bi$ be a complex number such that $\lambda=a\mu+bj\mu$. Let $p,q:V\rightarrow\mathbb{R}$ be the linear coordinate dual to $\mu,\lambda$. Thus $z=p+\tau q$ is a holomorphic coordinate of $(V,j)$. In this coordinate $\omega=4\pi dp\wedge dq$. A section $t\in C^{\infty}(V,L)$ is holomorphic if and only if
$$
0=\nabla_{\bar{Z}} t=\frac{\partial t}{\partial p}-\frac{1}{\tau} \frac{\partial t}{\partial q}+2 \pi i\left(q+\frac{p}{\tau}\right) t
$$
for $Z=\mu-\frac{\tau}{|\tau|}\lambda$. One readily checks that $t(p,q)=exp(2\pi i q(p+\tau q))$ is a homlomorphic section of $L$. Similarly $t'(p,q)=exp(\pi i q(p+\tau q))$ is a holomorphic section of $L^{\frac{1}{2}}$. Therefore any holomorphic section of $L^{r}\otimes L^{\frac{1}{2}}$ is of the form $gt^{r+\frac{1}{2}}$ where $t^{r+\frac{1}{2}}(p,q)\triangleq t^{r}(p,q)t'(p,q)$ and $g:V\rightarrow \mathbb{C}$ is holomorphic, which means that it satisfies
\begin{equation}\label{holog}
    \frac{\partial g}{ \partial q}=\tau \frac{\partial g }{\partial p}.
\end{equation}

Consider $V\times U(1)$ as a Heisenberg group with the product 
$$
(x, u) \cdot(y, v)=\left(x+y, u v \exp \left(\frac{ (2r+1)}{4} i\omega(x, y)\right)\right),
$$
and denote it by $G_{r+\frac{1}{2}}$. The same formula defines an action of $G_{r+\frac{1}{2}}$ on $L^{r}\otimes L^{\frac{1}{2}}$ preserving the connection and the hermitian structure. Consider the subgroup $\Lambda\times\{1\}$ of $G_{r+\frac{1}{2}}$. For any $x\in \Lambda$ denote by $T^{*}_x$ the pullback by the action of $(x,1)\in\Lambda\times\{1\}$, which means that for any $\Phi\in\Gamma(V,L^{r}\otimes L^{\frac{1}{2}}),$
\begin{equation}\label{T}
   (T^{*}_x \Phi)(y)=exp\left(\frac{ (2r+1)}{4} i\omega(x, y)\right)\Phi(x+y).
\end{equation}

One can compute that
\begin{equation}\label{TMN}
 (T^{*}_{m\mu+n\lambda} t^{r+\frac{1}{2}})(p,q)=exp\left((2r+1)\pi i\left(\tau n^2+2n(p+\tau q)\right)\right)t^{r+\frac{1}{2}}(p,q).
\end{equation}
Therefore if $gt^{r+\frac{1}{2}}\in H^0(V,L^{r}\otimes L^{\frac{1}{2}})$ is invariant under the action of $\Lambda\times\{1\}$, then 
\begin{equation}\label{gmn}
 g(p+m,q+n)=exp\left(-(2r+1)\pi i\left(\tau n^2+2n(p+\tau q)\right)\right)g(p,q). 
\end{equation}
In particular, $g(p+1, q)=g(p, q)$. Consequently, 
$$
g(p, q)=\sum_{m \in \mathbb{Z}} g_m(q) \exp (2 m\pi i  p)
$$
for some smooth functions $g_m: \mathbb{R} \rightarrow \mathbb{C}$. The condition (\hyperref[holog]{3.1}) implies
$$
g_m'=2m\pi i\tau g_m
$$
for every $m\in\mathbb{Z}$, so there is a sequence $(\rho_m)_{m\in\mathbb{Z}}$ of complex number such that $g_m(q)=\rho_m exp(2m\pi i\tau q).$ We obtain that 
\begin{equation}\label{g}
    g(z)=\sum_{m \in \mathbb{Z}}\rho_m exp(2m\pi i z).
\end{equation}.

On the one hand, from (\hyperref[g]{3.5}) we have that 
$$
g(z+n\tau)=\sum_{m \in \mathbb{Z}}\rho_m exp(2mn\pi i \tau)exp(2m\pi i z).
$$
On the other hand take $m=0$ in (\hyperref[gmn]{3.4}) we have that
$$
 g(z+n\tau)=exp\left(-(2r+1)\pi i\left(\tau n^2+2nz\right)\right)g(z). 
$$
Consequently we obtain that 
\begin{equation}\label{rho}
  \rho_{m+\left(2r+1\right)n}=exp\left(4n\pi i\tau\left(2m+\left(2r+1\right)n\right)\right)\rho_m,  
\end{equation} which means that the sequence $(\rho_m)_{m\in\mathbb{Z}}$ is determined by $\rho_0, \ldots, \rho_{2r}$. Note that any choice of such coeffcients yields an element of $\Lambda\times\{1\}$-invariant subspace of  $H^0(V,L^{r}\otimes L^{\frac{1}{2}})$. 

Let $G_{r+\frac{1}{2}}$ act trivially on $\delta$. Denote by $\mathcal{H}_{r+\frac{1}{2}}(j,\delta)$ the $\Lambda\times\{1\}$-subspace 
of $H^0(V,L^r\otimes L^{\frac{1}{2}}\otimes\delta)$. It has to be considered as the space of
holomorphic sections of the line bundle $L^r\otimes L^{\frac{1}{2}}\otimes\delta/(\Lambda\times\{1\})$ over the torus $T^2_{\lambda}=V/\Lambda$. We have just shown that $\mathcal{H}_{r+\frac{1}{2}}(j,\delta)$ has dimension $2r+1$. Define the inner product of $\Psi_1, \Psi_2 \in \mathcal{H}_{r+\frac{1}{2}}(j, \delta)$ by
$$
\left\langle\Psi_1, \Psi_2\right\rangle'_{r+\frac{1}{2}}=\int_D \left(h^{\otimes r}\otimes h_{\frac{1}{2}}\otimes h_{\delta}\right)_x\left(\Psi_1(x), \Psi_2(x)\right)\omega.
$$

One readily checks that $\mathcal{H}_{r+\frac{1}{2}}(j, \delta)$ is a representation of $\frac{\Lambda}{2r+1}\times\{1\}$, which gives us a canonical orthonormal basis of $\mathcal{H}_{r+\frac{1}{2}}(j, \delta)$ as follow.

\begin{lem}\label{commute}
For integer $r\geqslant 3 $, one has\\
(i) $T^{*}_{\frac{\mu}{2r+1}} t^{r+\frac{1}{2}}=t^{r+\frac{1}{2}}$\\
(ii) $T^{*}_{\frac{\mu}{2r+1}}T^{*}_{\frac{\lambda}{2r+1}}=exp(\frac{2\pi i}{2r+1})T^{*}_{\frac{\lambda}{2r+1}}T^{*}_{\frac{\mu}{2r+1}}.$
\end{lem}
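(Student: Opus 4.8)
The plan is to prove both identities by direct computation from the explicit formula \eqref{T} for the pullback operators $T^*_x$ together with the closed form $t^{r+\frac{1}{2}}(p,q)=\exp\!\big((2r+1)\pi i\, q(p+\tau q)\big)$ recorded above.

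For (i) I would apply \eqref{T} with $x=\frac{\mu}{2r+1}$. Since $p,q$ are dual to $\mu,\lambda$, this translation sends $(p,q)$ to $(p+\frac{1}{2r+1},q)$ and fixes the coordinate $q$. Writing out $(T^*_{\mu/(2r+1)}t^{r+\frac{1}{2}})(p,q)$ then produces two factors: the automorphy factor $\exp\!\big(\tfrac{2r+1}{4}i\,\omega(\tfrac{\mu}{2r+1},y)\big)$, which is linear in $q$, and the translation ratio $t^{r+\frac{1}{2}}(p+\tfrac{1}{2r+1},q)/t^{r+\frac{1}{2}}(p,q)$, which is also linear in $q$. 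The point is that $\mu$ is exactly the direction along which the chosen automorphy factor of $t^{r+\frac{1}{2}}$ is trivial, so these two linear-in-$q$ phases cancel and only $t^{r+\frac{1}{2}}$ survives. Equivalently, one may read this off \eqref{TMN}: its exponent $(2r+1)\pi i(\tau n^2+2n(p+\tau q))$ depends only on the $\lambda$-coefficient $n$, so setting $n=0$ gives the result, the same derivation remaining valid for the fractional translate.

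For (ii) I would use that $x\mapsto T^*_x$ is compatible with the Heisenberg product of $G_{r+\frac{1}{2}}$. Composing two such operators and applying \eqref{T} twice gives, for all $x,y\in V$,
\[
 T^*_x\,T^*_y=\exp\!\Big(\tfrac{2r+1}{2}\,i\,\omega(y,x)\Big)\,T^*_y\,T^*_x,
\]
so reversing the order only introduces the central cocycle of the group. Specializing to $x=\frac{\mu}{2r+1}$, $y=\frac{\lambda}{2r+1}$ and using bilinearity with the normalization $\omega(\mu,\lambda)=4\pi$ gives $\omega\!\big(\tfrac{\mu}{2r+1},\tfrac{\lambda}{2r+1}\big)=\tfrac{4\pi}{(2r+1)^2}$, so the scalar prefactor collapses to the primitive root of unity $\exp\!\big(\tfrac{2\pi i}{2r+1}\big)$, which is precisely (ii).

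The two phase manipulations themselves are routine Gaussian bookkeeping. The one place that genuinely requires care is the consistent tracking of signs in the automorphy cocycle: the normalization $\omega(\mu,\lambda)=4\pi$, the coefficient $\frac{2r+1}{4}$ in \eqref{T}, and the rescaling by $\frac{1}{2r+1}$ must be combined so as to yield $\exp(\frac{2\pi i}{2r+1})$ with the correct orientation rather than its complex conjugate. I expect this sign normalization to be the main (if minor) obstacle, and fixing it is exactly what simultaneously guarantees the vanishing of the automorphy factor in the $\mu$-direction used for (i).
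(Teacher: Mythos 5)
Your proof of (i) is essentially the paper's: the paper simply substitutes $m=\frac{1}{2r+1}$, $n=0$ into \eqref{TMN}, and your remark that the derivation of \eqref{TMN} remains valid for fractional translates (nothing in it uses integrality of $m,n$) is precisely what makes that substitution legitimate. For (ii) you take a genuinely different route: you invoke a universal Heisenberg cocycle identity $T^*_xT^*_y=\exp\bigl(\pm\tfrac{2r+1}{2}i\,\omega(x,y)\bigr)T^*_yT^*_x$ valid for all $x,y\in V$ and then specialize, whereas the paper writes both operators out in the $(p,q)$ coordinates, $(T^*_{\mu/(2r+1)}\Psi)(p,q)=e^{-\pi i q}\Psi(p+\tfrac{1}{2r+1},q)$ and $(T^*_{\lambda/(2r+1)}\Psi)(p,q)=e^{\pi i p}\Psi(p,q+\tfrac{1}{2r+1})$, and compares the two compositions directly. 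Your version is more conceptual and yields the commutation relation for arbitrary pairs of translates at once; the paper's is a two-line coordinate check that produces only the case needed.

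However, as written your step (ii) contains a sign inconsistency, and it is exactly the one you flagged as the crux. Your stated general relation carries the prefactor $\exp\bigl(\tfrac{2r+1}{2}i\,\omega(y,x)\bigr)$, yet you then evaluate $\omega\bigl(\tfrac{\mu}{2r+1},\tfrac{\lambda}{2r+1}\bigr)=+\tfrac{4\pi}{(2r+1)^2}$, i.e.\ with the arguments in the opposite order; feeding the correct value $\omega(y,x)=-\tfrac{4\pi}{(2r+1)^2}$ into your formula would give $\exp\bigl(-\tfrac{2\pi i}{2r+1}\bigr)$, the conjugate of what the lemma asserts. The deeper issue is that equation \eqref{T}, taken literally with the factor $\exp\bigl(+\tfrac{2r+1}{4}i\,\omega(x,y)\bigr)$, is inconsistent with the lemma itself: it yields $T^*_{\mu/(2r+1)}t^{r+\frac12}=e^{2\pi i q}\,t^{r+\frac12}$, so even (i) fails under that reading. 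The pullback by the action of $(x,1)$ in fact carries the factor $\exp\bigl(-\tfrac{2r+1}{4}i\,\omega(x,y)\bigr)$, and this is the convention the paper's own proof silently uses (it is the source of the factors $e^{-\pi i q}$ and $e^{+\pi i p}$ above). Under that corrected convention your cocycle identity becomes $T^*_xT^*_y=\exp\bigl(\tfrac{2r+1}{2}i\,\omega(x,y)\bigr)T^*_yT^*_x$, and your numerical evaluation is then exactly right, giving $\exp\bigl(\tfrac{2\pi i}{2r+1}\bigr)$. So your plan is sound and your conclusion correct, but the argument is not complete until you fix the sign convention in \eqref{T} once and use it consistently in both (i) and (ii), rather than stating the relation in one convention and evaluating it in the other.
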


\begin{proof}
Taking $m=\frac{1}{2r+1},n=0$ in (\hyperref[TMN]{3.3}) we obtain (i). From (\hyperref[T]{3.2}) we show that
$$(T^{*}_{\frac{\mu}{2r+1}}\Psi)(p,q)=exp(-\pi i q)\Psi(p+\frac{1}{2r+1},q),(T^{*}_{\frac{\lambda}{2r+1}}\Psi)(p,q)=exp(\pi i p)\Psi(p,q+\frac{1}{2r+1}),$$
and we deduce (ii) from these formulas.
\end{proof}

\begin{thm}\label{onb}
There exists an orthonormal
basis $\left(\Psi_{l}\right)_{l \in \mathbb{Z} /(2r+1) \mathbb{Z}}$ of $\mathcal{H}_{r+\frac{1}{2}}(j, \delta)$ such that 
$$T^{*}_{\frac{\mu}{2r+1}}\Psi_l=exp(\frac{2l\pi i}{2r+1})\Psi_l, T^{*}_{\frac{\lambda}{2r+1}}\Psi_l=\Psi_{l+1}.$$
\end{thm}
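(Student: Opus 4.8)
The plan is to regard $U:=T^{*}_{\mu/(2r+1)}$ and $W:=T^{*}_{\lambda/(2r+1)}$ as a finite clock-and-shift (Weyl) pair acting on the $(2r+1)$-dimensional space $\mathcal{H}_{r+\frac{1}{2}}(j,\delta)$ and to diagonalize them by the standard finite Heisenberg argument. First I would record the three structural facts they satisfy. For unitarity, note that the $G_{r+\frac{1}{2}}$-action preserves the connection and the metrics $h,h_{\frac{1}{2}},h_{\delta}$, so the pointwise norm transforms by $|T^{*}_{x}\Psi|^{2}(y)=|\Psi|^{2}(x+y)$; since every $\Psi\in\mathcal{H}_{r+\frac{1}{2}}(j,\delta)$ is $\Lambda$-invariant, $|\Psi|^{2}$ is genuinely $\Lambda$-periodic, so translating the fundamental domain $D$ by $x$ does not change $\int_{D}|\Psi|^{2}\omega$ and hence $U$ and $W$ preserve $\langle\,\cdot\,,\,\cdot\,\rangle'_{r+\frac{1}{2}}$. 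For the orders, the isotropy $\omega(\mu,\mu)=\omega(\lambda,\lambda)=0$ kills every Heisenberg cocycle term, so $(\tfrac{\mu}{2r+1},1)^{2r+1}=(\mu,1)$ and $(\tfrac{\lambda}{2r+1},1)^{2r+1}=(\lambda,1)$; as $(\mu,1),(\lambda,1)\in\Lambda\times\{1\}$ act as the identity on the invariant subspace, this gives $U^{2r+1}=W^{2r+1}=\mathrm{Id}$. The commutation relation $UW=\exp(\frac{2\pi i}{2r+1})WU$ is exactly Lemma \ref{commute}(ii).

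Writing $\zeta=\exp(\frac{2\pi i}{2r+1})$, the operator $U$ is unitary with $U^{2r+1}=\mathrm{Id}$, hence diagonalizable with eigenvalues among the $(2r+1)$-st roots of unity; I would set $E_{l}=\ker(U-\zeta^{l}\,\mathrm{Id})$ for $l\in\mathbb{Z}/(2r+1)\mathbb{Z}$, giving $\mathcal{H}_{r+\frac{1}{2}}(j,\delta)=\bigoplus_{l}E_{l}$. The commutation relation makes $W$ shift this grading: if $U\Psi=\zeta^{l}\Psi$ then $U(W\Psi)=\zeta WU\Psi=\zeta^{l+1}(W\Psi)$, so $WE_{l}\subseteq E_{l+1}$, and invertibility of $W$ upgrades this to an isomorphism $W\colon E_{l}\xrightarrow{\sim}E_{l+1}$. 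In particular all $E_{l}$ share a common dimension $d$; since the whole space is nonzero, the cyclic action of $W$ makes every $E_{l}$ nonzero, and the count $(2r+1)d=\dim\mathcal{H}_{r+\frac{1}{2}}(j,\delta)=2r+1$ forces $d=1$.

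To finish I would choose a unit vector $\Psi_{0}\in E_{0}$ and set $\Psi_{l}=W^{l}\Psi_{0}$. The shift property gives $\Psi_{l}\in E_{l}$, whence $T^{*}_{\mu/(2r+1)}\Psi_{l}=\exp(\frac{2l\pi i}{2r+1})\Psi_{l}$ and $T^{*}_{\lambda/(2r+1)}\Psi_{l}=\Psi_{l+1}$, while $W^{2r+1}\Psi_{0}=\Psi_{0}$ shows the labelling descends to $\mathbb{Z}/(2r+1)\mathbb{Z}$. Unitarity of $W$ yields $\|\Psi_{l}\|=1$, and eigenvectors of the unitary operator $U$ attached to the distinct eigenvalues $\zeta^{l}$ are automatically orthogonal, so $(\Psi_{l})_{l\in\mathbb{Z}/(2r+1)\mathbb{Z}}$ is the required orthonormal basis.

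The representation-theoretic heart—diagonalizing a finite clock-and-shift pair—is routine once the three facts above are in hand, so I expect the only genuine work to lie in that first step: confirming that $U$ and $W$ are truly unitary for the integral inner product and that their orders are exactly $2r+1$. Both rest on the same bookkeeping, that $\Lambda$-invariance makes $|\Psi|^{2}$ honestly $\Lambda$-periodic and that the Heisenberg cocycle collapses on the isotropic generators $\mu,\lambda$; the explicit transformation formulas appearing in the proof of Lemma \ref{commute} are precisely what is needed to fix the signs.
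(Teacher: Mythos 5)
Your proof is correct, and it diverges from the paper's exactly at the point where the theorem needs a genuine argument: pinning the spectrum of $T^{*}_{\mu/(2r+1)}$ to the exact roots of unity $\exp(\frac{2l\pi i}{2r+1})$ rather than to $\lambda_0\exp(\frac{2l\pi i}{2r+1})$ for some unknown phase $\lambda_0$. The paper shares your skeleton (unitarity via the formulas in the proof of Lemma \ref{commute}, the Weyl relation producing a ladder of eigenvectors, the dimension count $\dim\mathcal{H}_{r+\frac{1}{2}}(j,\delta)=2r+1$), but it anchors the ladder at eigenvalue $1$ by an explicit construction: $\Psi_0=(\frac{4\pi}{2r+1})^{\frac{1}{4}}g_0t^{r+\frac{1}{2}}\Omega_{\mu}$ with $g_0$ a theta series, followed by the verification $T^{*}_{\mu/(2r+1)}\Psi_0=\Psi_0$ and the norm computations for $\Omega_{\mu}$ and $g_0t^{r+\frac{1}{2}}$. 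You instead derive $U^{2r+1}=W^{2r+1}=\mathrm{Id}$ from the vanishing of the Heisenberg cocycle on isotropic directions together with triviality of $\Lambda\times\{1\}$ on invariants, so that all eigenvalues are $(2r+1)$-st roots of unity and the $W$-ladder forces every eigenspace $E_l$ to be one-dimensional; in particular $E_0\neq\{0\}$, which is what replaces the explicit eigenvector. Your route is more elementary (no theta functions, no integrals), and it also supplies the cyclicity $W^{2r+1}\Psi_0=\Psi_0$ needed for the $\mathbb{Z}/(2r+1)\mathbb{Z}$ indexing, a point the paper's argument leaves implicit (from the ladder alone one only gets $W^{2r+1}\Psi_0=c\Psi_0$ for some unimodular $c$). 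What the paper's construction buys and yours does not is the closed formula for $\Psi_0$, hence for all $\Psi_l$: that formula is used essentially in the proof of the subsequent theorem, where the behavior of $g_0$, $t^{r+\frac{1}{2}}$, and $\Omega_{\mu}$ under the changes of basis induced by $T$ and $S$ is computed term by term, so the explicit basis cannot be dispensed with in the paper as a whole.
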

\begin{proof}
Firstly, from the proof of Lemma \hyperref[commute]{3.7} we observe that $T^{*}_{\frac{\mu}{2r+1}}$ and $T^{*}_{\frac{\lambda}{2r+1}}$ are unitary. Secondly, the  commutative property implies that if $\lambda_0$ is an eigenvalue of $T^{*}_{\frac{\mu}{2r+1}}$ and $\Psi_0$ is an unit eigenvector associated with $\lambda_0$, then $T^{*}_{\frac{\lambda}{2r+1}}\Psi_0$ is an eigenvector of $T^{*}_{\frac{\mu}{2r+1}}$ with eigenvalue $exp(\frac{2\pi i}{2r+1})\lambda_0$. Consequently, $T^{*}_{\frac{\mu}{2r+1}}$ has 2r+1 distinct eigenvalues 
$$\lambda_l=exp(\frac{2l\pi i}{2r+1})\lambda_0, l=0,\dots,2r,$$
and $(\Psi_l=(T^{*}_{\frac{\lambda}{2r+1}})^l\Psi_0)_{l \in \mathbb{Z} /(2r+1) \mathbb{Z}}$ forms an orthonormal basis of $\mathcal{H}_{r+\frac{1}{2}}(j, \delta)$ since the dimension of this quantum space is 2r+1.

Let $\Omega_{\mu}$ be a vector of $\delta$ such that $\varphi(\Omega_{\mu}^2)(\mu)=1$. Define
$$
\Psi_0=(\frac{4\pi}{2r+1})^{\frac{1}{4}}g_0t^{r+\frac{1}{2}}\Omega_{\mu},
$$
where $g_0(z)=\sum_{m\in\mathbb{Z}}exp\left(m\pi i\left((4r+2)z+(2r+1)m\tau\right)\right)$. Observed that by taking $\rho_0=1,\rho_1=\dots=\rho_{2r}=0$ in (\hyperref[rho]{3.6}) we obtain $g_0$. Therefore, $\Psi_0$ is a element of $\mathcal{H}_{r+\frac{1}{2}}(j, \delta)$. Next, since $T^{*}_{\frac{\mu}{2r+1}} t^{r+\frac{1}{2}}=t^{r+\frac{1}{2}}$ and $g_0(z+\frac{1}{2r+1})=g_0(z)$ we have that $T^{*}_{\frac{\mu}{2r+1}}\Psi_0=\Psi_0$. Moreover, from a standard computation we obtain that $$||\Omega_{\mu}||_{r+\frac{1}{2}}=(\frac{b}{2\pi})^{\frac{1}{4}}, ||g_0t^{r+\frac{1}{2}}||_{r+\frac{1}{2}}=(\frac{8\pi^2}{(2r+1)b})^{\frac{1}{4}}.$$ We conclude that $\Psi_0$ is an unit eigenvector of $T^{*}_{\frac{\mu}{2r+1}}$ with eigenvalue 1.
\end{proof}

The $SU(2)$-character variety of torus, or the moduli space of flat $SU(2)$-connections on  peripheral torus, is the variety $Hom(\pi_1(T^2),SU(2))/conj$, which can be view as $V/(\Lambda\rtimes \mathbb{Z}_2)$. Therefore, we focus on the space
$$
\mathcal{H}^{alt}_{r+\frac{1}{2}}(j, \delta)\triangleq\{\Psi\in\mathcal{H}_{r+\frac{1}{2}}(j, \delta)|\Psi(x)=-\Psi(-x)\}.
$$
Theorem \hyperref[onb]{3.8} implies that 
$$\left(\Phi_l=\frac{1}{\sqrt{2}}\left(\Psi_l-\Psi_{-l}\right)\right)_{l=1,\dots,r}
$$
forms an orthonormal basis of $\mathcal{H}^{alt}_{r+\frac{1}{2}}(j, \delta)$.

Fix a knot $K$ in $S^3$ and let $\Sigma$ be the boundary of $S^3\backslash K$. choosing an oriented diffeomorphism $\psi:\Sigma\rightarrow S^1\times S^1$ and let $\mu$ and $\lambda$ be the homology classes of $\psi^{-1}\left(S^1 \times\right.$ $\{1\})$ and $\psi^{-1}\left(\{1\} \times S^1\right)$ respectively. On the topological side, we have a hermitian space $V_r'(\Sigma)$  with an orthonormal basis $(e_n)_{n=0,\dots,r-1}$ . On the geometric side, the variety $Hom(\pi_1(\Sigma),SU(2))/conj$ can be viewed as $H_1(\Sigma, \mathbb{R})/(H_1(\Sigma, \mathbb{Z})\rtimes\mathbb{Z}_2)$, where we endow $H_1(\Sigma, \mathbb{R})$ with a symplectic form $\omega(x,y)=4\pi x\cdot y$, a linear complex sturcture $j$
and a metaplectic form $\delta$. Choose $\{\mu,\lambda\}$ as a basis of $H_1(\Sigma, \mathbb{Z})$ we have a hermitian space $\mathcal{H}^{alt}_{r+\frac{1}{2}}(j, \delta)$ with an orthonormal basis $\left(\Phi_l\right)_{l=1,\dots,r}.$ For any oriented curve $\gamma\subset\Sigma$ we define an endomorphim of $\mathcal{H}^{alt}_{r+\frac{1}{2}}(j, \delta)$ by
$$T_{r+\frac{1}{2}}(\gamma)=-\left(T_{\frac{\gamma}{2r+1}}^*+T_{-\frac{\gamma}{2r+1}}^*\right).$$
Now we show that these two spaces are isomorphic naturally with respect to the projective action of the mapping class group of $\Sigma$.

\begin{thm}
For any oriented curve $\gamma\subset\Sigma$, there is an isomorphism $I_r':V_r'(\Sigma)\rightarrow\mathcal{H}^{alt}_{r+\frac{1}{2}}(j, \delta)$ such that the following diagram commutes:
\begin{center}

\tikzset{every picture/.style={line width=0.75pt}} 

\begin{tikzpicture}[x=0.75pt,y=0.75pt,yscale=-1,xscale=1]

\draw    (509,42.96) -- (587.8,41.99) ;
\draw [shift={(589.8,41.96)}, rotate = 179.29] [color={rgb, 255:red, 0; green, 0; blue, 0 }  ][line width=0.75]    (10.93,-3.29) .. controls (6.95,-1.4) and (3.31,-0.3) .. (0,0) .. controls (3.31,0.3) and (6.95,1.4) .. (10.93,3.29)   ;
\draw    (482,58.96) -- (481.81,106.96) ;
\draw [shift={(481.8,108.96)}, rotate = 270.23] [color={rgb, 255:red, 0; green, 0; blue, 0 }  ][line width=0.75]    (10.93,-3.29) .. controls (6.95,-1.4) and (3.31,-0.3) .. (0,0) .. controls (3.31,0.3) and (6.95,1.4) .. (10.93,3.29)   ;
\draw    (636,56.96) -- (636.77,106.96) ;
\draw [shift={(636.8,108.96)}, rotate = 269.12] [color={rgb, 255:red, 0; green, 0; blue, 0 }  ][line width=0.75]    (10.93,-3.29) .. controls (6.95,-1.4) and (3.31,-0.3) .. (0,0) .. controls (3.31,0.3) and (6.95,1.4) .. (10.93,3.29)   ;
\draw    (515,118.96) -- (593.8,117.99) ;
\draw [shift={(595.8,117.96)}, rotate = 179.29] [color={rgb, 255:red, 0; green, 0; blue, 0 }  ][line width=0.75]    (10.93,-3.29) .. controls (6.95,-1.4) and (3.31,-0.3) .. (0,0) .. controls (3.31,0.3) and (6.95,1.4) .. (10.93,3.29)   ;

\draw (457,33.36) node [anchor=north west][inner sep=0.75pt]    {$V_{r} '( \Sigma )$};
\draw (595,29.36) node [anchor=north west][inner sep=0.75pt]    {$\mathcal{H}_{r+\frac{1}{2}}^{alt} (j,\delta )$};
\draw (537,20.36) node [anchor=north west][inner sep=0.75pt]    {$I_{r} '$};
\draw (539,95.36) node [anchor=north west][inner sep=0.75pt]    {$I_{r} '$};
\draw (599,106.36) node [anchor=north west][inner sep=0.75pt]    {$\mathcal{H}_{r+\frac{1}{2}}^{alt} (j,\delta )$};
\draw (458,109.36) node [anchor=north west][inner sep=0.75pt]    {$V_{r} '( \Sigma )$};
\draw (435,70.36) node [anchor=north west][inner sep=0.75pt]    {$Z_{r} '( \gamma )$};
\draw (643,70.36) node [anchor=north west][inner sep=0.75pt]    {$T_{r+\frac{1}{2}}( \gamma )$};

\end{tikzpicture}    
\end{center}
\end{thm}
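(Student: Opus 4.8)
The plan is to define $I_r'$ on the two orthonormal bases, verify the intertwining on the coordinate curves $\mu,\lambda$ by direct computation, and then propagate it to an arbitrary curve using the algebraic structure of the curve operators. Concretely, I would set $I_r'(e_n)=\Phi_{n+1}$ for $n=0,\dots,r-1$. Since $(e_n)_{n=0}^{r-1}$ and $(\Phi_l)_{l=1}^{r}$ are orthonormal bases of the two $r$-dimensional hermitian spaces, $I_r'$ is automatically a unitary isomorphism, so the entire content of the statement is the relation $I_r'\circ Z_r'(\gamma)=T_{r+\frac12}(\gamma)\circ I_r'$.

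\emph{The two generating curves.} First I would treat $\gamma=\mu$ and $\gamma=\lambda$ directly. Writing $\zeta=\exp(\tfrac{2\pi i}{2r+1})$ and using Theorem \ref{onb} together with $\Phi_l=\tfrac1{\sqrt2}(\Psi_l-\Psi_{-l})$, the eigenvector structure $T^*_{\mu/(2r+1)}\Psi_l=\zeta^l\Psi_l$ and the shift $T^*_{\lambda/(2r+1)}\Psi_l=\Psi_{l+1}$ give, after a short computation,
\[
T_{r+\frac12}(\mu)\Phi_l=-2\cos\!\Big(\tfrac{2l\pi}{2r+1}\Big)\Phi_l,\qquad T_{r+\frac12}(\lambda)\Phi_l=-(\Phi_{l-1}+\Phi_{l+1}).
\]
Comparing with \eqref{mulambda} under the substitution $l=n+1$ matches both operators with $Z_r'(\mu)$ and $Z_r'(\lambda)$, provided the boundary terms $\Phi_0$ and $\Phi_{r+1}$ are interpreted correctly.

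\emph{Boundary and truncation.} This is the subtle point. From $\Phi_l=\tfrac1{\sqrt2}(\Psi_l-\Psi_{-l})$ and the periodicity $\Psi_l=\Psi_{l+(2r+1)}$ one reads off $\Phi_0=0$ and the reflection $\Phi_{2r+1-l}=-\Phi_l$, hence $\Phi_{r+1}=-\Phi_r$. On the skein side the matching identities are $e_{-1}=0$ and the $SO(3)$-truncation relation $e_{2r-1-n}=-e_n$, so $e_r=-e_{r-1}$; these are compatible with \eqref{mulambda}, since $n$ and $2r-1-n$ share the eigenvalue $-2\cos\frac{2(n+1)\pi}{2r+1}$. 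With these conventions the computation above closes up at both ends, establishing the diagram for $\gamma=\mu$ and $\gamma=\lambda$.

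\emph{General curves, and the main obstacle.} To pass from the coordinate curves to an arbitrary oriented curve I would use that both operator families are representations of the \emph{same} noncommutative torus. On the geometric side the translations $T^*_{\gamma/(2r+1)}$ satisfy the Weyl relations of Lemma \ref{commute} with commutator parameter $\zeta$, so every $T_{r+\frac12}(\gamma)$ lies in the algebra generated by $T_{r+\frac12}(\mu)$ and $T_{r+\frac12}(\lambda)$; on the skein side the Frohman--Gelca product-to-sum formula expresses $Z_r'(\gamma)$ through $Z_r'(\mu)$ and $Z_r'(\lambda)$ with the same quantum parameter, a primitive $(2r+1)$-st root of unity. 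Because $I_r'$ already intertwines the generators and the two presentations use identical structure constants, the same word expresses $Z_r'(\gamma)$ and $T_{r+\frac12}(\gamma)$, and the intertwining propagates to all $\gamma$; alternatively one may write $\gamma=g\cdot\mu$ with $g\in SL(2,\mathbb{Z})$ and invoke naturality of both assignments, after checking that $I_r'$ commutes with the $S$- and $T$-actions of \eqref{TS}. I expect the heart of the argument to be precisely this matching of algebraic structures: confirming that the skein parameter $A^2$ equals the Heisenberg parameter $\zeta$ of Lemma \ref{commute}, and that the boundary/truncation relations on the two sides correspond under $e_n\mapsto\Phi_{n+1}$. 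Once these are pinned down, the verification for $\mu,\lambda$ and its propagation to all curves are routine.
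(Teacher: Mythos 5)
Your proposal is correct in substance, but its main line of argument is genuinely different from the paper's, so a comparison is worth recording. You verify the intertwining directly on both generators $\mu$ and $\lambda$, including the boundary bookkeeping $\Phi_0=0$, $\Phi_{r+1}=-\Phi_r$ versus $e_{-1}=0$, $e_r=-e_{r-1}$; this truncation is indeed the one forced by the paper's conventions in (2.1), being the only choice under which $Z_r'(\lambda)$ is conjugate to $Z_r'(\mu)$ via $Z_r'(S)$, so flagging it explicitly is a genuine service. You then propagate to arbitrary curves through algebra: the product-to-sum formula of Frohman--Gelca expresses every curve in the skein algebra of $T^2$ as a noncommutative polynomial in the two coordinate curves (this recursion needs $A^4\neq 1$, which holds here), the Weyl relations behind Lemma 3.7 show that the operators $T_{r+\frac{1}{2}}(\gamma)$ satisfy the same product-to-sum relations once the skein parameter $A^2$ is matched with $\zeta=\exp(\frac{2\pi i}{2r+1})$, and intertwining of generators then propagates to all curves. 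The paper does none of this: it never checks $\lambda$ directly and never touches the truncation. Instead, it proves that $I_r'$ is independent of the choice of parametrization $\psi$ by computing how the theta basis $(\Psi_l)$ transforms under the $SL(2,\mathbb{Z})$ generators $T$ and $S$ (Gauss sums and Poisson summation) and matching this against the skein-side formulas (2.2); for an arbitrary $\gamma$ it then chooses $\psi$ with $\psi(\gamma)=S^1\times\{1\}$, so that only the diagonal eigenvalue computation for $\mu$ is ever needed --- this is exactly the alternative you mention in one clause at the end. The trade-off: your route replaces all theta-function and Gauss-sum computations by citable algebra plus a short Weyl-relation check, and it concentrates the risk precisely at the convention-matching points you identify (choice of primitive root $A$, sign in the curve operators, truncation relation), which is where errors would hide. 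The paper's route costs more computation but buys the equivariance of $I_r'$ under the projective $S$- and $T$-actions as an explicit byproduct, which is of independent interest for the mapping class group statement made in the introduction; in your approach that equivariance is not immediate and would have to be recovered afterwards, for instance by Schur's lemma, using that the curve operators act irreducibly on $V_r'(\Sigma)$.
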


\begin{proof}
 We have shown that both $V_r'(\Sigma)$ and $\mathcal{H}^{alt}_{r+\frac{1}{2}}(j, \delta)$ have dimension $r$ and that given an oriented diffeomorphism $\psi:\Sigma\rightarrow S^1\times S^1$ we have orthonormal basis $(e_l)_{l=0,\dots,r-1}$ and $(\Phi_l)_{l=1,\dots,r}$ respectively. Define an isomorphism $I_r'$ from $V_r'(\Sigma)$ to $\mathcal{H}^{alt}_{r+\frac{1}{2}}(j, \delta)$ by
 $$
 I_r'(e_l)=\Psi_{l+1}, l=0,\dots,r-1.
 $$

 Let $\tilde{\psi}:\Sigma\rightarrow S^1\times S^1$ be another oriented diffeomorphism and let $(\tilde{e_l})_{l=0,\dots,r-1}$ and $(\tilde{\Phi}_l)_{l=1,\dots,r}$ be the corresponding orthonormal basis respectively.. Define another isomorphism $\tilde{I_r'}$ by  $\tilde{I_r'}(\tilde{e_l})=\tilde{\Psi}_{l+1}$ for $l=0,\dots,r-1$. 
 
 We show that $I_r'=\tilde{I_r'}$ as follow. The automorphism 
 $\tilde{\psi}\circ\psi^{-1}:S^1\times S^1\rightarrow S^1\times S^1$ is isotopic to a linear map in $SL(2,\mathbb{Z})$. Thus we may assume that $\tilde{\psi}\circ\psi^{-1}$ is $T$ or $S$. On the one hand, by (\hyperref[TS]{2.2}) we have that 
 \begin{equation}
     \tilde{e_l}=Z_r'(\tilde{\psi}\circ\psi^{-1})e_l=\begin{cases}

      exp(\frac{\pi i(l^2+2l)}{2r+1})e_l,& \tilde{\psi}\circ\psi^{-1}=T\\
      \underset{m \in \mathbb{Z} /(2r+1) \mathbb{Z}}{\sum}\frac{2ie^{-\frac{\pi i}{4}}}{\sqrt{2r+1} } \sin(\frac{2\pi(m+1)(l+1)}{2r+1})e_m,&\tilde{\psi}\circ\psi^{-1}=S.
      
     \end{cases}
 \end{equation}

 On the other hand, let $(\mu,\lambda)$ and $(\tilde{\mu},\tilde{\lambda})$ be two basis of $H_1(\Sigma,\mathbb{Z})$ defined from $\psi$ and $\tilde{\psi}$ respectively. Then we have that
 \begin{equation}
     (\tilde{\mu},\tilde{\lambda})=\begin{cases}
         (\mu+\lambda,\lambda),&\tilde{\psi}\circ\psi^{-1}=T\\
         (\lambda,-\mu),&\tilde{\psi}\circ\psi^{-1}=S.
     \end{cases}
 \end{equation}

We show that the relationship between $\tilde{\Phi}_l$ and $\Phi_l$ is the same as the relationship between $\tilde{e_l}$ and $e_l$. In the first case, we have that $\tilde{p}=p+q$, $\tilde{q}=q$, $\tilde{\tau}=\frac{\tau}{\tau +1}$ and $\Omega_{\tilde{\mu}}=\Omega_{\mu+\lambda}=(\frac{1}{1+\tau})^{\frac{1}{2}}\Omega_{\mu}$. Then we compute that
\begin{equation}
\begin{split}
    t^{r+\frac{1}{2}}_{\tilde{\mu},\tilde{\lambda}}&=exp((2r+1)\pi i\tilde{q}(\tilde{p}+\tilde{\tau}\tilde{q}))\\
&=exp((2r+1)\pi i q (p+q+\tilde{\tau}q))\\
&=exp((2r+1)\pi i q^2 (1+\tilde{\tau}-\tau))t^{r+\frac{1}{2}}_{\mu,\lambda},
\end{split}
\end{equation}

\begin{equation}
    \begin{split}
       (T^{*}_{\frac{\tilde{\lambda}}{2r+1}})^l(g_{0;\tilde{\mu},\tilde{\lambda}})(0)&=\underset{m\in\mathbb{Z}}{\sum}exp((2r+1)\pi i \tilde{\tau} m^2+2\pi i \tilde{\tau} l m) \\
       &=exp(\frac{-\pi i \tilde{\tau} l^2}{2r+1})\underset{m\in\mathbb{Z}}{\sum}exp((2r+1)\pi i \tilde{\tau}(m+\frac{l}{2r+1})^2)\\
       &=e^{\frac{-\pi i \tilde{\tau} l^2}{2r+1}}(\frac{i}{(2r+1)\tilde{\tau}})^{\frac{1}{2}}\underset{m\in\mathbb{Z}}{\sum}e^{\frac{-\pi i m^2}{(2r+1)\tilde{\tau}}}e^{\frac{2\pi i lm}{2r+1}}\\
              &=e^{\frac{-\pi i \tilde{\tau} l^2}{2r+1}}(\frac{i}{(2r+1)\tilde{\tau}})^{\frac{1}{2}}\underset{m\in\mathbb{Z}}{\sum}e^{\frac{-\pi i m^2}{(2r+1)\tau}}e^{\frac{-\pi i m^2+2\pi i lm}{2r+1}}\\
       &=e^{\frac{-\pi i (\tilde{\tau}-\tau) l^2}{2r+1}}e^{\frac{-\pi i}{2r+1}}(\frac{1}{1+\tau})^{-\frac{1}{2}}(e^{\frac{-\pi i \tau l^2}{2r+1}}(\frac{i}{(2r+1)\tau})^{\frac{1}{2}}\underset{m\in\mathbb{Z}}{\sum}e^{\frac{-\pi i m^2}{(2r+1)\tau}}e^{\frac{2\pi i lm}{2r+1}})\\
       &=e^{\frac{-\pi i (\tilde{\tau}-\tau) l^2}{2r+1}}e^{\frac{-\pi i}{2r+1}}(\frac{1}{1+\tau})^{-\frac{1}{2}} (T^{*}_{\frac{\lambda}{2r+1}})^l(g_{0;\mu,\lambda})(0).
    \end{split}
\end{equation}

Consequently, we have that
\begin{equation}
    \begin{split}
      (T^{*}_{\frac{\tilde{\lambda}}{2r+1}})^l(\tilde{\Psi}_0)(0) &= (\frac{4\pi}{2r+1})^{\frac{1}{4}}\Omega_{\tilde{\mu}}  (T^{*}_{\frac{\tilde{\lambda}}{2r+1}})^l(t^{r+\frac{1}{2}}_{\tilde{\mu},\tilde{\lambda}})(0)(T^{*}_{\frac{\tilde{\lambda}}{2r+1}})^l(g_{0;\tilde{\mu},\tilde{\lambda}})(0)\\
      &=(\frac{4\pi}{2r+1})^{\frac{1}{4}}(\frac{1}{1+\tau})^{\frac{1}{2}}\Omega_{\mu}e^{\frac{\pi i  (1+\tilde{\tau}-\tau)l^2}{2r+1}} (T^{*}_{\frac{\lambda}{2r+1}})^l(t^{r+\frac{1}{2}}_{\mu,\lambda})(0)\\
      &\times e^{\frac{-\pi i (\tilde{\tau}-\tau) l^2}{2r+1}}e^{\frac{-\pi i}{2r+1}}(\frac{1}{1+\tau})^{-\frac{1}{2}} (T^{*}_{\frac{\lambda}{2r+1}})^l(g_{0;\mu,\lambda})(0)\\
     &=e^{\frac{\pi i  (l^2-1)}{2r+1}}(\frac{4\pi}{2r+1})^{\frac{1}{4}}\Omega_{\mu}  (T^{*}_{\frac{\lambda}{2r+1}})^l(t^{r+\frac{1}{2}}_{\mu,\lambda})(0)(T^{*}_{\frac{\lambda}{2r+1}})^l(g_{0;\mu,\lambda})(0)\\
     &=e^{\frac{\pi i  (l^2-1)}{2r+1}}(T^{*}_{\frac{\lambda}{2r+1}})^l(\Psi_0)(0),
\end{split}
\end{equation}
which implies that $\tilde{\Psi}_l=e^{\frac{\pi i  (l^2-1)}{2r+1}}\Psi_l$. It follows that
$$
\Phi_l=exp(\frac{\pi i  (l^2-1)}{2r+1})\Phi_l, l=1,\dots,r,
$$
which shows the result for the first case.

 In the second case, we have that $\tilde{p}=q$, $\tilde{q}=-p$, $\tilde{\tau}=-\frac{1}{\tau}$ and $\Omega_{\tilde{\mu}}=\Omega_{\lambda}=(\frac{1}{\tau})^{\frac{1}{2}}\Omega_{\mu}$. Then we compute that 

 \begin{equation}
     \begin{split}
         \tilde{\Psi}_0(0)&=(\frac{4\pi}{2r+1})^{\frac{1}{4}}\Omega_{\tilde{\mu}}\underset{m\in\mathbb{Z}}{\sum}exp((2r+1)\pi i \tilde{\tau} m^2)\\
         &=(\frac{4\pi}{2r+1})^{\frac{1}{4}}(\frac{1}{\tau})^{\frac{1}{2}}\Omega_{\mu}\underset{m\in\mathbb{Z}}{\sum}exp(\frac{-(2r+1)\pi i m^2}{\tau})\\
         &=(\frac{4\pi}{2r+1})^{\frac{1}{4}}\Omega_{\mu}(\frac{1}{(2r+1)i})^{\frac{1}{2}}(\frac{(2r+1)i}{\tau})^{\frac{1}{2}}\underset{m\in\mathbb{Z}}{\sum}exp(\frac{-(2r+1)\pi i m^2}{\tau})\\
         &=(\frac{1}{(2r+1)i})^{\frac{1}{2}}(\frac{4\pi}{2r+1})^{\frac{1}{4}}\Omega_{\mu}\underset{m\in\mathbb{Z}}{\sum}exp(\frac{\pi i \tau m^2}{2r+1})\\
         &=e^{-\frac{\pi i}{4}}(2r+1)^{-\frac{1}{2}}\underset{m \in \mathbb{Z} /(2r+1) \mathbb{Z}}{\sum}T^{*}_{\frac{m\lambda}{2r+1}}(\Psi_0)(0).
     \end{split}
 \end{equation}
 It follows that 
 \begin{equation}
     \begin{split}
        \tilde{\Psi}_l&=(T^{*}_{\frac{\tilde{\lambda}}{2r+1}})^l(\tilde{\Psi}_0) \\
        &=e^{-\frac{\pi i}{4}}(2r+1)^{-\frac{1}{2}}\underset{m \in \mathbb{Z} /(2r+1) \mathbb{Z}}{\sum}(T^{*}_{\frac{-l\mu}{2r+1}})^l(\Psi_m)\\
        &=e^{-\frac{\pi i}{4}}(2r+1)^{-\frac{1}{2}}\underset{m \in \mathbb{Z} /(2r+1) \mathbb{Z}}{\sum}e^{\frac{-\pi i l m}{2r+1}}(\Psi_m).        
     \end{split}
 \end{equation}
Thus, we have that 
\begin{equation}
    \tilde{\Phi}_l=2ie^{-\frac{\pi i}{4}}(2r+1)^{-\frac{1}{2}}\underset{m \in \mathbb{Z} /(2r+1) \mathbb{Z}}{\sum}\sin(\frac{\pi l m}{2r+1})\Phi_m,
\end{equation}
which show the result of the second case.

 Therefore, $I_r'$ is independent of choices of $\psi$. For any $\gamma\subset\Sigma$ let $\psi$ be an oriented diffeomorphism from $\Sigma$ to $S^1\times S^1$ such that $\psi(\gamma)=S^1\times\{1\}$. Then we have that
\begin{equation}
    \begin{split}
        I'_r\circ Z'_r(\gamma)(e_{l-1})&=-2\cos(\frac{2\pi l}{2r+1})I_r'(e_{l-1})\\
        &=-(e^{\frac{2\pi i l}{2r+1}}+e^{-\frac{2\pi i l}{2r+1}})\Phi_l\\
        &=-(T^{*}_{\frac{\gamma}{2r+1}}+T^{*}_{\frac{-\gamma}{2r+1}})(\frac{1}{\sqrt{2}}(\Psi_l-\Psi_{-l}))\\
        &=T_{r+\frac{1}{2}}(\gamma)(\Phi_l)=T_{r+\frac{1}{2}}(\gamma)\circ I'_r(e_{l-1}). 
    \end{split}
\end{equation}
 
\end{proof}

Consequently, the $SO(3)$-knot state $Z'_r(S^3\backslash K)\in V_r'(\Sigma)$ defined from skein theory can be view as a holomorphic section
$$Z_r'(S^3\backslash K)=\underset{1 \leqslant n \leqslant r}{\sum}\eta_{r}'\langle e_{n-1}\rangle_K \Phi_n\in\mathcal{H}^{alt}_{r+\frac{1}{2}}(j, \delta).$$
Recall that $|\langle e_{n-1}\rangle|=\left|[n]J_{K,n}\left(exp\left(\frac{2\pi i}{r+\frac{1}{2}}\right)\right)\right|$. Thus the $L^2$-norm of $Z'_r(S^3\backslash K)$ is 
\begin{equation}\label{l2}
  ||Z'_r(S^3\backslash K)||_{r+\frac{1}{2}}^2=\underset{1 \leqslant n \leqslant r}{\sum}|\eta_{r}'[n]|^2\left|J_{K,n}\left(exp\left(\frac{2\pi i}{r+\frac{1}{2}}\right)\right)\right|^2.  
\end{equation}

\section{Toward the volume conjecture}
Kashaev\cite{kashaev1997hyperbolic} defines a link invariant using quantum dilogarithm functions and conjectures the absolute value of this invariant of the link $L$ grow exponentially with rate equal to the simplicial volume of the complement of $L$. Murakami and Murakami\cite{murakami2001colored} show that Kashaev’s invariants coincide with the values of the colored Jones polynomials at a certain root of unity and reformulate this conjecture to
$$
\lim _{r \rightarrow+\infty} \frac{2 \pi}{r} \log \left|J_{L,r}\left(exp\left(\frac{2\pi i}{r}\right)\right)\right|\overset{?}{=}\operatorname{vol}\left(S^{3} \backslash L\right).
$$

From the study of asymptotic behavior of Turaev-Viro invariants, Tian Yang et.al.\cite{chen2018volume}\cite{detcherry2018turaev} make an argument that the evaluation in the above conjecture could be replace by $exp\left(\frac{2\pi i}{r+\frac{1}{2}}\right)$, that is 
$$
\lim _{r \rightarrow+\infty} \frac{2 \pi}{r} \log \left|J_{L,r}\left(exp\left(\frac{2\pi i}{r+\frac{1}{2}}\right)\right)\right|\overset{?}{=}\operatorname{vol}\left(S^{3} \backslash L\right).
$$

From many numerical results we observe that $$\underset{1 \leqslant n \leqslant r}{max}\left|J_{L,n}\left(exp\left(\frac{2\pi i}{r+\frac{1}{2}}\right)\right)\right|=\left|J_{L,r}\left(exp\left(\frac{2\pi i}{r+\frac{1}{2}}\right)\right)\right|,$$ and note that $\underset{1 \leqslant n \leqslant r}{max}|\eta_{r}'[n]|$ grows polynomially in $r$. Therefore, by (\hyperref[l2]{3.19
}) we state the following conjecture.

\begin{conj}
    For any knot $K$ in $S^3$, let $Z_r'(S^3\backslash K)\in\mathcal{H}^{alt}_{r+\frac{1}{2}}(j, \delta)$ be its $r$-th $SO(3)$-knot state. Then we have
    \begin{equation}
        \lim _{r \rightarrow+\infty} \frac{2 \pi}{r} \log ||Z'_r(S^3\backslash K)||_{r+\frac{1}{2}}=\operatorname{vol}\left(S^{3} \backslash K\right).
    \end{equation}
    
\end{conj}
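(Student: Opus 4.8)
The plan is to reduce the conjecture to the colored Jones volume conjecture at the shifted root of unity $\exp(2\pi i/(r+\frac12))$, combined with a dominance property of the colored Jones values, and then to conclude by a squeeze argument applied to the explicit $L^2$-norm formula \eqref{l2}. Writing $a_n = |\eta_r'[n]|\,\bigl|J_{K,n}\bigl(\exp(2\pi i/(r+\frac12))\bigr)\bigr|$, the norm $\|Z_r'(S^3\backslash K)\|_{r+\frac12}^2$ is a sum of the $r$ nonnegative terms $a_n^2$, so it is trapped between its largest single term and $r$ times that term:
\begin{equation}
\max_{1\le n\le r} a_n^2 \;\le\; \bigl\|Z_r'(S^3\backslash K)\bigr\|_{r+\frac12}^2 \;\le\; r\cdot \max_{1\le n\le r} a_n^2 .
\end{equation}

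Next I would apply $\frac{2\pi}{r}\log$ to the square root of each side. The combinatorial factor $r$ in the upper bound contributes only $\frac{\pi}{r}\log r\to 0$, so both bounds share the same limit $\lim_{r\to\infty}\frac{2\pi}{r}\log\max_n a_n$. To evaluate it I would invoke two inputs. First, since $\max_{1\le n\le r}|\eta_r'[n]|$ grows only polynomially in $r$, its logarithm contributes $O\bigl(\frac{\log r}{r}\bigr)\to 0$ and drops out. Second, the empirical dominance that the maximum of $\bigl|J_{K,n}(\exp(2\pi i/(r+\frac12)))\bigr|$ over $1\le n\le r$ is attained at the top color $n=r$; granting this, the limit reduces to $\lim_{r\to\infty}\frac{2\pi}{r}\log\bigl|J_{K,r}(\exp(2\pi i/(r+\frac12)))\bigr|$, which by the shifted volume conjecture discussed above equals $\operatorname{vol}(S^3\backslash K)$.

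The hard part will be justifying these two inputs, both genuinely open. The shifted-root volume conjecture for the top colored Jones polynomial is known only in special cases, so at this step the reduction is exactly as difficult as the volume conjecture itself and does not circumvent it. More delicate is the dominance claim: proving that $n=r$ maximizes $|J_{K,n}(\exp(2\pi i/(r+\frac12)))|$ over all $n\le r$ demands uniform control of the colored Jones values across the whole range of colors, i.e. the double-scaling asymptotics of $J_{K,n}(\exp(2\pi i/(r+\frac12)))$ as both $n$ and $r$ tend to infinity. A weaker statement would in fact suffice for the limit, namely that no subfamily of colors $n<r$ produces exponential growth with rate exceeding $\operatorname{vol}(S^3\backslash K)$; but establishing even this appears to require a saddle-point analysis of the state-integral presentation of $J_{K,n}$ carried out uniformly in the color, which is the principal analytic obstacle.

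Finally, I would observe that the geometric-quantization picture suggests a conceptual route to the growth rate. Since $Z_r'(S^3\backslash K)$ is a holomorphic section of $L^r\otimes L^{\frac12}\otimes\delta$, its $L^2$-norm should admit a stationary-phase estimate of the integral defining $\langle\cdot,\cdot\rangle'_{r+\frac12}$, with the exponential rate governed by the critical values of the Chern--Simons action on the character variety. Matching these critical values to the hyperbolic volume through the geometry of $Hom(\pi_1(S^3\backslash K),SL(2,\mathbb{C}))$ would furnish a geometric, rather than computational, derivation of the rate, and I expect this reformulation to be the most promising avenue for attacking the dominance obstacle, though making it rigorous lies beyond the scope of this sketch.
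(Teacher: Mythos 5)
Your reduction is precisely the paper's own reasoning: the statement is a conjecture, not a theorem, and the paper motivates it exactly as you do --- via the $L^2$-norm formula expressing $\|Z_r'(S^3\backslash K)\|_{r+\frac{1}{2}}^2$ as a sum of $r$ terms, the polynomial growth of $\max_{1\le n\le r}|\eta_r'[n]|$, the numerically observed dominance of the top color $n=r$, and the Chen--Yang shifted evaluation $\exp(2\pi i/(r+\frac{1}{2}))$ of the volume conjecture. Your squeeze argument makes the paper's informal derivation explicit, and your identification of the two genuinely open inputs (the shifted volume conjecture and the color-dominance claim) correctly reflects the paper's own characterization of the statement as ``somewhat equivalent to the volume conjecture.''
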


Using Heegard decomposition Garoufalidis\cite{garoufalidis2011asymptotics} shows that for any 3-manifold $M$ the $SU(2)$-norm of $Z_r(M)$ grows only polynomial in $r$.  Furthermore, he shows that for any positive integer $l$, $\left|J_{L,r}\left(exp\left(\frac{2\pi i}{r+l}\right)\right)\right|$ grows
only polynomially in $r$. However, Garoufalidis's original argument is failed in the case of $SO(3)$ since the $SO(3)$-TQFT is not unitary, which is pointed out by ZhengHan Wang.

\section{Discussion} 
Why shall we study the volume conjecture geometrically? First, along the geometric path we can use some geometric analytic approaches, like the asymptotic analysis of Schwartz kernel, to study some quantum topological problems. Also, through some changes of basis we can represent the norms of knot states as other new geometric topological invariants.   

But the most important motivation is that in the geometric picture we can use the idea of holographic complexity, which play an important role in high energy and gravitational research recently, to the quantum space in TQFT. For example, using Nielsen’s framework\cite{dowling2008geometry} we can study some complexity metrics in the space of Berezin-Toeplitz operators. We may expect that these geometric quantities capture some topological information of hyperbolic 3-manifolds, inspired by the holographic complexity=volume conjecture.
Meanwhile, developing geometric approaches to the volume conjecture provides a mathematical path to this mysterious and charming program, as the CS/WZW correspondence is a mathematically rigorous holographic principal. 

Moreover, complexity emerges naturally through Mahler measure of A polynomials and topological complexity of hyperbolic 3-manifold. On the one hand, it has been shown for some examples that the Mahler measure of A-polynomial is equal to the simplicial volume of the corresponding link complement. Inspired by the AJ conjecture\cite{garoufalidis2004characteristic} we may expect that the Mahler measure is just the classical complexity. On the other hand, the simplicial volumes of 3-manifolds have complexity descriptions, like stable complexity and Matveev's topological complexity.
Also there are some arithmetic objects emerge in this picture, like L-functions and quantum modular forms\cite{zagier2010quantum}. We will discuss this program in more detail in the future.

\printbibliography

\end{document}